\documentclass[12pt,reqno,tbtags]{amsart}

\usepackage{setspace}

\usepackage{amsmath, amsfonts, amsthm}

\usepackage{mathtools}
\usepackage{color}
\usepackage{float}
\usepackage{url}
\usepackage{comment}

\usepackage{pgfplots}
\pgfplotsset{compat=1.18}

\usepackage[dvipsnames]{xcolor}

\usepackage[top=3cm, bottom=2.5cm, left=1cm, right=1cm]{geometry}

\usepackage{newcent}
\usepackage[T1]{fontenc}
\usepackage[utf8]{inputenc}
\usepackage{babel}
\usepackage{bbm}
\usepackage{xspace}
\usepackage[hidelinks]{hyperref}
\usepackage{enumitem}
\usepackage[most]{tcolorbox}
\usepackage{xpatch}
\usepackage{stmaryrd}
\usepackage{amsmath, amssymb, amsthm, amsfonts}

\usepackage{tikz}
\usepackage{pgfplots}
\pgfplotsset{compat=1.18}

\usepackage{amssymb}
\usepackage{enumitem}
\usepackage{makecell}
\usepackage{graphicx}%
\usepackage{multirow}%
\usepackage{amsmath,amssymb,amsfonts}%

\usepackage{mathrsfs}%
\usepackage[title]{appendix}%
\usepackage{xcolor}%
\usepackage{textcomp}%
\usepackage{manyfoot}%
\usepackage{booktabs}%
\usepackage{algorithm}%
\usepackage{algorithmicx}%
\usepackage{algpseudocode}%
\usepackage{algorithm}
\usepackage{algpseudocode}

\algdef{SE}[SWITCH]{Switch}{EndSwitch}[1]{\textbf{switch} #1 \textbf{do}}{\textbf{end switch}}
\algdef{SE}[CASE]{Case}{EndCase}[1]{\textbf{case} #1:}{\textbf{end case}}
\algdef{SE}[DEFAULT]{Default}{EndDefault}{\textbf{default}:}{\textbf{end default}}
\algtext*{EndCase}
\algtext*{EndDefault}
\usepackage{listings}

\usepackage{mathtools}

\usepackage{comment}
\usepackage{caption}
\usepackage{subcaption}

\usepackage{fullpage}

\usepackage{fancyhdr}
\newcommand{\I}{\mathbbm{1}}
\newcommand{\EXP}{\mathbb{E}}

\newcommand{\PROB}{\mathbb{P}}

\newcommand{\inlaw}{\buildrel {\mathcal L} \over =}
\newcommand{\RR}{\mathbb{R}}

\newcommand{\isdef}{\buildrel {\rm def} \over =}

\usepackage{makecell} 

\newtheorem{theorem}{Theorem}

\newtheorem{remark}{Remark}

\usepackage[numbers, sort&compress]{natbib}

\usepackage{ebgaramond}

\begin{document}

\title[PearsonIV]{The acceptance-complement method revisited}

\author{Luc Devroye$\dagger$}
\thanks{$\dagger$School of Computer Science, McGill University, 
		Montr\'eal, Qu\'ebec,  Canada: {lucdevroye@gmail.com}. Supported by the Natural Sciences and Engineering Research Council of Canada (NSERC) under grant number RGPIN-2024-04164}


\begin{abstract}
We revisit the acceptance-complement method in random variate generation and show how it can replace the rejection method in many examples.  While the rejection method has geometrically distributed execution times, the acceptance-complement method has a constant (deterministic) run time and qualifies as a ``one-liner''.  
We show how this method can be used to efficiently generate random variates from several distributions, such as the gamma and beta. In addition, we show that there is an acceptance-complement method that is valid for all log-concave densities with known location of the mode and black-box type access to the density.  
\end{abstract}


\keywords{Random variate generation, 
Acceptance-complement method, 
Rejection method, 
Simulation, Monte Carlo method, 
Gamma distribution,
Beta distribution,
Log-concave distributions,
Probability inequalities}

\subjclass[2010]{65C10, 65C05, 11K45, 68U20}

\maketitle




\section{The acceptance-complement method}\label{AC}

To generate random variates with a given density $f$ on $\RR$,
Kronmal and Peterson \cite{Kronmal1, Kronmal2} (see also \cite{ahrens1982computer, deak1981economical}) introduced
the acceptance-complement method, which requires two nonnegative functions, $q$ and $r$, such that
\begin{itemize}
\item [(i)] $r$ is proportional to a density for which we have a simple random variate generator at hand, and $f \ge r$.
\item [(ii)] $q$ is a density for which we have a simple random variate generator at hand, and $q \ge f-r$.
\end{itemize}
We assume that we have a source that can generate independent, identically distributed (i.i.d.) uniform $[0,1]$ random variates of infinite precision.

\begin{algorithm}[H]
\caption{The generic acceptance-complement method}\label{AC1}
\begin{algorithmic}[1]
\State let $Y$ have density $q$ 
\State let $U$ be uniform on $[0,1]$
\If {$U q(Y) \le f(Y)-r(Y)$}
\State return $X \gets Y$ and halt 
\Else 
\State return a random variate $X$ with density proportional to $r$ 
\EndIf
\Comment{$X$ has density $f$}
\end{algorithmic}
\end{algorithm}

In contrast, if $f \le cq$, where $q$ is a density and $c \ge 1$ is a constant (the so-called rejection constant), then von Neumann's rejection method \cite{vonneumann1963various, D5} is as follows:

\begin{algorithm}[H]
\caption{The rejection method}\label{R}
\begin{algorithmic}[2]
\Repeat
\State let $Y$ have density $q$ 
\State let $U$ be uniform on $[0,1]$
\Until {$U cq(Y) \le f(Y)$}
\State return $X \gets Y$ and halt
\Comment{$X$ has density $f$}
\end{algorithmic}
\end{algorithm}

The number of iterations before halting is geometrically distributed with parameter $1/c$, and the expected number of iterations is $c$.  When $n$ i.i.d. random variates are generated 
using the rejection method, the worst-case number of iterations for one variate is in probability asymptotic to $$\log_{1/(1-1/c)} n.$$
In contrast, the acceptance-complement method requires only one iteration. 

The second advantage of the acceptance-complement method is that it provides a compact representation of a random variate $X$ with density $f$ as a fixed explicit function of a finite number of i.i.d.\ uniform $[0,1]$ random variates, assuming that random variates from both $q$ and $r$ have such representations.
In \cite{D1}, these are called one-liners, since one can generate such random variables using only one line of code.
Not having a loop has the added benefit that, if necessary, one can hard-code or machine-code such algorithms quite efficiently.
Well-known examples include
several distributions with explicitly invertible distribution functions, and famous cases in which a random variate can be written as a function of two or more independent uniform $[0,1]$ random variates. In the list below, $U, U_1, U_2, \ldots$ are independent uniform $[0,1]$ random variates. 
\begin{itemize}
    \item[(i)] The exponential random variable: $\log\left(\frac 1U\right)$.
    \item[(ii)] The Maxwell random variable: $\sqrt{2 \log\left(\frac 1U\right)}$.
    \item[(iii)] The Gumbel random variable: $\log\log \left(\frac 1U\right)$.
     \item[(iv)] The logistic or sigmoid random variable: $\log \left(\frac {U}{1-U}\right)$.
     \item[(v)] The Weibull random variable with parameter $a > 0$: $\log^a \left(\frac 1U\right)$.
     \item[(vi)] The Student $t_1$  or Cauchy random variable: $\tan (\pi(U-1/2))$.
     \item[(vii)] The Student $t_2$  random variable: $\frac{2U-1}{\sqrt{2U(1-U)}}$ \cite{LL}.
    \item[(viii)] The standard Gaussian random variable: $\sqrt{2 \log(1/U_1)} \cos ( 2 \pi U_2 )$. This is known as the Box-M\"uller or polar method.
    \item[(ix)]
    The unilateral stable random variable  $S_\alpha$ of parameter $\alpha \in (0,1)$ can be generated as $
\left( { A(\pi U_1) \over \log(1/U_2) } \right)^{1-\alpha \over \alpha}$, where 
$$
A(u) \isdef \left\{  {( \sin (\alpha u))^\alpha (\sin ((1-\alpha)u)^{1-\alpha}
                     \over
		     \sin u }
		     \right\}^{1 \over 1-\alpha}.
$$
This method is due to Kanter \cite{kanter1975stable}.
 \item[(x)] The general stable distribution also has a one-liner based upon a function of two uniform $[0,1]$ random variables based on an integral representation
of Zolotarev \cite{zolotarev1966representation, zolotarev1986one} and developed by Chambers, Mallows and Stuck \cite{chambers1976method}.
\item[(xi)] The Student $t_a$ random variable with parameter  $a > 0$:  $\sqrt{a} \sin(2 \pi U_1 ) \sqrt{U_2^{-2/a} - 1}$. This is known as Bailey's polar method \cite{Bailey}.
\end{itemize}
We can include various functions of random variables with known exact forms to this extensive list. However, it long remained unclear whether beta and gamma distributions (along with their descendants) could be added, provided one limits the operations to standard functions such as the exponential, logarithm, and trigonometric functions, excluding the beta or gamma functions themselves \cite{D1}. Greaves \cite{greaves2026, greaves2026extended} answered the latter question in the affirmative.  In this paper, we give a simpler method than that developed by Greaves, and add a generic acceptance-complement algorithm that works for all log-concave densities.

The algorithms given below are loopless---execution of the programs proceeds linearly, perhaps interrupted by branches, also known as ``if'' statements or switches.  Such algorithms are equivalent to one-liners if one accepts the indicator function as a member of the permitted family of functions.

The design of a rejection method centers around finding a tight envelope to keep the constant $c$ small.  
In contrast, the acceptance-complement method benefits from having a good lower bound $f \ge r$, so that one has more freedom to find at least one good upper bound $q$ for the difference $f-r$. The restriction is that $q$ must be a density, or, as we shall see below, a positive function with $\int q \le 1$.  Thus, the design starts with the lower bound $r$.

\section{The modified acceptance-complement method}\label{AC2}

To generate random variates with a given density $f$ on $\RR$, one could relax the requirements as follows:
\begin{itemize}
\item [(i)] $r$ is proportional to a density for which we have a simple random variate generator at hand, and $f \ge r$.
\item [(ii)] $q$ is a nonnegative function with $\int q \le 1$ such that $q \ge f-r$, and we can generate random variates with density proportional to $q$.
\end{itemize}

\begin{algorithm}[H]
\caption{The modified acceptance-complement method}\label{ACM}
\begin{algorithmic}[3]
\State let $V$ be uniform on $[0,1]$
\If {$V \ge 1 - \int q$}
\State let $Y$ have density proportional to $q$ 
\State let $U$ be uniform on $[0,1]$
\If {$U q(Y) \le f(Y)-r(Y)$}
\State return $X \gets Y$ and halt 
\EndIf 
\EndIf
\State return a random variate $X$ with density proportional to $r$ 
\Comment{$X$ has density $f$}
\end{algorithmic}
\end{algorithm}

The design of an algorithm requires a squeeze of $f$ such that
$$
0 \le r \le f \le q+r,
$$
where both $q$ and $r$ have integrals less than or equal to one.

\medskip

\section{Example: the gamma density}\label{Gamma}

The gamma density with parameter $a > 0$ is given by
$$
f(x) =\frac{x^{a-1} e^{-x}}{\Gamma (a)}, x > 0,
$$
where $\Gamma$ is Euler's gamma function:
$$
\Gamma (a) = \int_0^\infty x^{a-1} e^{-x} \, dx.
$$
For $a \ge 1$, the gamma density is log-concave, hence unimodal,  with a unique mode at $a-1$. The density itself is concave
if
$$ | x - (a-1) | \le \sqrt{a-1}, x \ge 0,
$$
and convex otherwise.
Set
$$
x_- = \max(0, a-1 - \sqrt{a-1},
x_+ = a-1 + \sqrt{a-1}.
$$
A trivial, yet surprisingly useful, lower bound is
$$
f(x) \ge r(x) =
\begin{cases}
0 & x < x_- \\
  f(x_-)& x_- \le x \le a-1, \\ 
  f(x_+)& a-1  \le x \le x_+ \\
  0& x > x_+.
\end{cases}
$$
The case $x_- = 0$ only occurs when $1 \le a \le 2$,
so we assume in the sequel that $a \ge 2$.
A random variate with density proportional to $r$ is simply generated as
$$
a-1 + \sqrt{a-1} \left( 2 \I_{U \le \frac{f(x_+)}{f(x_+) + f(x_-)} } -1 \right).
$$
With this in hand, we are done if we can find a density $q$ with a simple representation in terms of uniform $[0,1]$ random variables that dominates $f-r$.
To this end, we use the log-concavity of $f$ and note that the first derivative of $\log f$ is
$$
\frac{a-1-x}{x}.
$$
At $x_+$, this derivative is $-\sqrt{a-1}/x_+$, while at $x_-$, it is $\sqrt{a-1}/x_-$ when $x_- > 0$ and $\infty$ when $x_-=0$.
Thus,
\begin{align}
\label{gammatails}
f(x) \le
\begin{cases}
    f(x_+) e^{-(x-x_+)\frac{\sqrt{a-1}}{x_+} }& x > x_+, \\
    f(a-1) & x_- \le x \le x_+, \\
    f(x_-) e^{-(x_- -x)\frac{\sqrt{a-1}}{x_-} }& 0 < x < x_-, \\
    0 & x \le 0.
\end{cases}
\end{align}
A random variable with density proportional to the first upper bound (the right tail) is
$$
a-1 + \sqrt{a-1} + \frac {x_+ E}{\sqrt{a-1}}
\inlaw
a-1 + \sqrt{a-1}\left(1 + \log\left(\frac 1U\right) \left(1 + \frac{1}{\sqrt{a-1}}\right) \right),
$$
where $E$ is exponential, and $U$ is uniform on $[0,1]$. 
A random variable with density proportional to the second upper bound (the left tail) when $x_- > 0$ is
$$
a-1 - \sqrt{a-1} - \frac {x_- E}{\sqrt{a-1}}
\inlaw
a-1 - \sqrt{a-1}\left(1 + \log\left(\frac 1U\right) \left(1 - \frac{1}{\sqrt{a-1}}\right) \right),
$$
where $E$ is exponential, and $U$ is uniform on $[0,1]$. 
The designer of the algorithm also needs the areas under these envelopes. For the right tail, it is
$$
\frac {f(x_+) x_+}{\sqrt{a-1}},
$$
and for the left tail, we get
$$
\frac {f(x_-) x_-}{\sqrt{a-1}}.
$$
Our first candidate for $q$ is the function defined by the upper bounds for the tails given in \eqref{gammatails}.
On $[x_-, a-1]$ and $[a-1,x_+]$, we can upper-bound $f(x)-r(x)$ by $f(a-1)-f(x_-)$ and $f(a-1)-f(x_+)$, respectively. Thus, we define
$$
q(x) =
\begin{cases}
    f(x_+) e^{-(x-x_+)\frac{\sqrt{a-1}}{x_+} }& x > x_+, \\
    f(a-1)-f(x_+) & a-1 < x \le x_+ ,\\
    f(a-1)-f(x_-) & x_- \le x \le a-1 , \\
    f(x_+) e^{-(x_-x)\frac{\sqrt{a-1}}{x_-} }& 0 < x < x_-.   
\end{cases}
$$
If we were to use these bounds in the modified acceptance-complement method, then we need to check that $\int q \le 1$.
Assuming $x_- > 0$, the area under the overall envelope is
\begin{align}
\label{A}
A
&\isdef 
\sqrt{a-1}  \times
\left(
 2f(a-1) - f(x_-) -f(x_+) + \frac {f(x_-) x_- + f(x_+) x_+}{a-1}
\right)\nonumber\\
&=
\sqrt{a-1} \times
\left(
 2f(a-1)  + \frac {f(x_+)- f(x_-) }{\sqrt{a-1}}
\right)\nonumber\\
&=
 2f(a-1)\sqrt{a-1}  + f(x_+)- f(x_-)\nonumber\\
&= f(a-1) \left( 2 \sqrt{a-1} + \frac{f(x_+)}{f(a-1)} -  \frac{f(x_-)}{f(a-1)} \right)\nonumber\\
&= \frac{1}{\Gamma (a)} \left( \frac{a-1}{e}\right)^{a-1} \left( 2 \sqrt{a-1} + e^{a-1-x_+} \left( \frac{x_+}{a-1} \right)^{a-1} - e^{a-1-x_-} \left( \frac{x_-}{a-1} \right)^{a-1} \right)\nonumber \\
&= \frac{1}{\Gamma (a)} \left( \frac{a-1}{e}\right)^{a-1} 
\left( 2 \sqrt{a-1} + \psi (\sqrt{a-1}) \right) , 
\end{align}
where
\begin{align}\label{psi}
    \psi (x) = 
e^{-x} \left( 1+ \frac{1}{x} \right)^{x^2} - e^{x} \left( 1- \frac{1}{x}  \right)^{x^2} .
\end{align}
Using a standard Stirling bound for the gamma function \cite{karatsuba2001asymptotic, mortici2011improved, mortici2011ramanujanlarge, mortici2011ramanujan, robbins1955remark}, we have
$$
f(a-1) 
= \frac{1} {\Gamma (a)} \left( \frac{a-1}{e} \right)^{a-1}
\le
\frac{1}{ \sqrt{2\pi (a-1)} ( 1 + 1/(2a-1))^{1/6}}.
$$
Combining this, one can show that $A < 1$ for $a \ge 5$.
To generate a random variate from $q$, we first select a random $J \in \{ 1,2,3,4\}$ to indicate which of the four parts of $q$ we should pick.  Clearly, $\PROB \{ J = i \}$ is proportional to the area under the bounding curve.  So, we have
$\PROB \{ J=i\} = p_i /(p_1+p_2+p_3+p_4)$, where
$$
p_i =
\begin{cases}
  \frac{ f(x_+) x_+ }{ \sqrt{a-1} } & i=1, \\
  (f(a-1)-f(x_+)) \sqrt{a-1} & i=2, \\
  (f(a-1)-f(x_-)) \sqrt{a-1} & i=3, \\
  \frac{ f(x_-) x_- }{ \sqrt{a-1} } & i=4.
\end{cases}
$$
A random variate $Y$ with density proportional to $q$ is then generated as follows, denoting a uniform $[0,1]$ random variable by $U$:
$$
Y =
\begin{cases}
  a-1 + \sqrt{a-1}\left(1 + \log\left(\frac 1U\right) \left(1 + \frac{1}{\sqrt{a-1}}\right) \right) & J=1, \\
  a-1 + U \sqrt{a-1} & J=2, \\
  a-1 - U \sqrt{a-1} & J=3, \\
  a-1 - \sqrt{a-1}\left(1 + \log\left(\frac 1U\right) \left(1 - \frac{1}{\sqrt{a-1}}\right) \right) & J=4.
\end{cases}
$$
We can now summarize the algorithm for the gamma law with parameter $a > 4.896$. To understand the algorithm, it is convenient to define
$$
g(x) = \frac{f(x)}{f(a-1)} = \left(\frac{x}{a-1}\right)^{a-1} e^{a-1-x},
$$
and
note that we can harmlessly replace $f$ by $g$ in the definition of the $p_i$'s.

\begin{algorithm}[H]
\caption{Generator for $r$ when $a \ge 2$}\label{r}
\begin{algorithmic}[4]
\State let $x_+ = a-1 + \sqrt{a-1}$, $x_- = a-1 - \sqrt{a-1}$
\State let $U, V$ be uniform on $[0,1]$
\If {$V \le f(x_+)/(f(x_+)+f(x_-))$}
\State return $X = a-1 + U \sqrt{a-1}$ and halt.
\Else
\State return $X = a-1 - U \sqrt{a-1}$ and halt.
\EndIf
\Comment{$X$ has density proportional to $r$}
\end{algorithmic}
\end{algorithm}

\begin{algorithm}[H]
\caption{Generator for the gamma density when $a \ge 5$}\label{gamma-gen}
\begin{algorithmic}[5]
\State $x_+ \gets a-1 + \sqrt{a-1}$, $x_- \gets a-1 - \sqrt{a-1}$, $A \gets $\eqref{A},\\
$p_1 \gets g(x_+)x_+/\sqrt{a-1}$, 
$p_2 \gets (g(a-1)-g(x_+))\sqrt{a-1}$ \\
$p_3 \gets (g(a-1)-g(x_-))\sqrt{a-1}$, 
$p_4 \gets g(x_-)x_-/\sqrt{a-1}$ \\
$p \gets p_1+p_2+p_3+p_4$
\State let $W$ be uniform on $[0,1]$
\If {$W \ge 1-A$}
\State let $J=j$ with probability $p_j/p$, $1 \le i \le 4$
\State let $U, V $ be uniform on $[0,1]$
\Switch{$J$}
        \Case{$J=1$}
            \State $Y \gets a-1 + \sqrt{a-1}\left(1 + \log\left(\frac 1U\right) \left(1 + \frac{1}{\sqrt{a-1}}\right) \right)$
            \If {$V f(x_+) e^{-(Y-x_+)\frac{\sqrt{a-1}}{x_+} } \le f(Y)$} \State return $X \gets Y$ and halt 
            \EndIf
        \EndCase
        \Case{$J=2$}
            \State $Y \gets a-1 + U \sqrt{a-1}$
            \If {$V (f(a-1)-f(x_+)) \le f(Y)-f(x_+)$} \State return $X \gets Y$ and halt
            \EndIf
        \EndCase
        \Case{$J=3$}
            \State $Y \gets a-1 - U \sqrt{a-1}$
            \If {$V (f(a-1)-f(x_-)) \le f(Y)-f(x_-)$} \State return $X \gets Y$ and halt
            \EndIf
        \EndCase
        \Case{$J=4$}
            \State $Y \gets a-1 - \sqrt{a-1}\left(1 + \log\left(\frac 1U\right) \left(1 - \frac{1}{\sqrt{a-1}}\right) \right)$
            \If {$V f(x_-) e^{-(x_- -Y)\frac{\sqrt{a-1}}{x_-} } \le f(Y)$} \State return $X \gets Y$ and halt 
            \EndIf
        \EndCase
\EndSwitch 
\EndIf
\State return $X$ with density proportional to $r$
\Comment{$X$ has density $f$}
\end{algorithmic}
\end{algorithm}

In this algorithm, all instances of $f$ can be replaced by $g$, so that we can avoid computing the gamma function. Unfortunately, the value
of $A$ depends upon $\Gamma (a)$.
If this value is available, then the above algorithm provides a constant-time method for simulating the gamma density. When $\Gamma (a)$ is not available, then the algorithm above needs to be adapted: see the next section.

Finally, when $a < 5$, then we can, perhaps repeatedly,  employ the observation that
$$
G_a \inlaw U^{\frac{1}{a}} G_{a+1},
$$
where $U$ is uniform $[0,1]$ and $G_a$ is a gamma $(a)$ random variable.  This would yield a one-liner for all gamma random variates for all $a > 0$.

\begin{remark}
\textsc{gamma random variates.}
For uniformly fast gamma random variates, we refer to the surveys
in \cite{D5} and \cite{Lu}. Among rejection algorithms,
the method of Marsaglia and Tsang \cite{Marsaglia+Tsang}
is highly recommended. Many simulation studies confirm
that the method of Schmeiser and Lal \cite{SL} is quite
competitive if the gamma parameter is at least one. Xi, Tan and Liu
\cite{Xi} suggested generating $\log (G_a)$ instead. As $\log (G_a)$
has a log-concave density for all values of $a > 0$, a uniformly
fast generator is quite easily obtained either by the universal
method of \cite{D4} or a specialized algorithm as developed, e.g.,
in \cite{D8}.
$\square$
\end{remark}

\begin{remark}
\textsc{derived distributions.}
We can combine one-liners to obtain additional ones. Examples include
\begin{itemize}
    \item[(i)] A beta $(a,b)$ random variate$B_{a,b}$ can be obtained as $G_a/(G_a + G_b)$, where $a,b > 0$.
    \item[(ii)] The beta prime distribution with parameters $a,b > 0$ is that of $B_{a,b}/(1-B_{a,b}) = G_a/G_b$.
    \item[(iii)] Teichroew's (or the variance-gamma) distribution, which is the law of $N\sqrt{G_a}$, where $N$ is a standard normal random variable \cite{teichroew1957mixture}.
\end{itemize}
\end{remark}


\section{The gamma density without access to the gamma function}\label{GammaWithout}

To take care of the remaining problem with the presence of $\Gamma (a)$ in \eqref{A}, we use a technique first suggested by Greaves \cite{greaves2026extended}.
In algorithm \ref{gamma-gen}, we find the condition
\begin{align}
\label{decision1}
\text{if } W \le 1-A \textrm{ then}\ldots ,
\end{align}
where $A \le 1$ is given by \eqref{A}, and $V$ is uniform on $[0,1]$ and independent of everything else.
Assume now that we can write
$$
A = \EXP \{ \varphi (Z) \},
$$
where $Z$ is a random variable, and $0 \le \varphi \le 1$.
Let us replace the decision \eqref{decision1} by 
\begin{align}
\label{decision2}
\text{if } V \le Z \textrm{ then}\ldots .
\end{align}
Note that both decisions are equivalent since
$$
\PROB \{ W \le 1-\varphi (Z)  \}
= \EXP \{ 1-\varphi (Z)  \}
= A 
= \PROB \{ W \le 1-A \}.
$$
We recall that
\begin{align*}
  A = \frac{1}{\Gamma (a)} \left( \frac{a-1}{e}\right)^{a-1} 
\left( 2 \sqrt{a-1} + \psi (\sqrt{a-1}) \right).  
\end{align*}
We propose $Z = |N|/\sqrt{a-1}$, where $N$ is a standard normal random variate, itself a function of two independent uniform $[0,1]$ random variates by the Box-M\"uller method. The density of $Z$ is
$$
\sqrt{2(a-1) \over \pi} e^{-\frac{(a-1)x^2}{2}}, x > 0.
$$
We define
$$
\varphi(x) = 
\frac{\left( 2 \sqrt{a-1} + \psi (\sqrt{a-1}) \right)}{\sqrt{2(a-1)\pi}}
e^{-(a-1)(\Phi(x)-x^2/2)} \I_{0 \le x\le \pi},
$$
where 
\begin{align}\label{Phi}
 \Phi(x) = 1 - \frac{x}{\tan x} + \log \left( \frac{x } {\sin x} \right).   
\end{align}
Greaves \cite{greaves2026} showed that $\Phi(x) - x^2/2$ is increasing on $[0,\pi)$, taking the value $0$ at $x=0$. Furthermore,
\begin{align*}
\EXP \{ \varphi (Z) \}
&= 
\frac{\left( 2 \sqrt{a-1} + \psi (\sqrt{a-1}) \right)}{\sqrt{2(a-1)\pi}}
\int_0^\pi \sqrt{2(a-1) \over \pi} e^{-\frac{(a-1)x^2}{2}} 
 e^{-(a-1)(\Phi(x)-x^2/2)} \, dx\\
& =  
 \left( 2 \sqrt{a-1} + \psi (\sqrt{a-1}) \right)
\frac{1}{\pi}\int_0^\pi  
 e^{-(a-1) \Phi(x)} \, dx\\
& =  
 \left( 2 \sqrt{a-1} + \psi (\sqrt{a-1}) \right)
 \frac{1}{\Gamma(a)}
 \left( \frac{a-1}{e} \right)^{a-1}
 \\
 &= A,
\end{align*}
where we used a Hankel contour representation
of the reciprocal gamma function \cite{temme1996special}.
We note that
\begin{align*}
0 
\le \varphi(Z) 
\le 
\frac{2 \sqrt{a-1} + \psi (\sqrt{a-1}) }{\sqrt{2(a-1)\pi}}
=
\sqrt{\frac{2}{\pi}} + \frac{\psi (\sqrt{a-1})}{\sqrt{2(a-1)\pi}}
< 1
\end{align*}
for $a \ge 6.2829$.
Therefore, we can avoid the gamma function altogether in algorithm \ref{gamma-gen} if $a \ge 6.2829$ and if the line ``if $W \ge 1-A$ then'' is replaced by these two lines,
\begin{align*}
&\textrm{let $Z=|N|/\sqrt{a-1}$ where $N$ is a standard normal random variable}\\
&\textrm{if } W \ge 1 - \frac{\left( 2 \sqrt{a-1} + \psi (\sqrt{a-1}) \right)}{\sqrt{2(a-1)\pi}}
e^{-(a-1)\Phi(Z)+N^2/2} \textrm { and } Z \le \pi \textrm { then  },
\end{align*}
where $\psi$ and $\Phi$ are defined in \eqref{psi} and \eqref{Phi}, respectively.

\begin{remark}
\textsc{open problem \# 1.}
The loopless algorithm given above---just like Greaves's algorithm---makes heavy use of branching and/or switching (i.e., ``if'' or ``case'' statements). It remains open whether a loopless algorithm without branching exists for the gamma distribution. 
$\square$
\end{remark}


\section{A universal acceptance-complement method for log-concave densities}\label{unimodal}

Assume that the density $f$ is log-concave with a mode at $m$.  
For this family, thanks to the inequality
\begin{align}\label{exp}
f(x) \le f(m) \min \left( 1 , e^{1 - |x-m| f(m)} \right),
\end{align}
there is a simple rejection algorithm with an expected number of iterations not exceeding $4$ \cite{D4}. That algorithm only needs access to a black box for $f$; higher derivatives of $f$ are not needed. Additional references on random variate generation for 
log-concave laws include \cite{HLD, LH, LH2, D5}.

The objective of this section is to show that there is a universal acceptance-complement method for this family as well---an ``off-the-shelf'' algorithm that applies to all log-concave densities. 

\begin{theorem}\label{theorem}
    Given the value of a mode $m$ and access to a black box that computes the value of the density $f$ at any point, there exists an implementation of the acceptance-complement method that applies to all log-concave densities on $\RR$ and has uniformly bounded execution time.
\end{theorem}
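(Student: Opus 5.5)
The plan is to build, after a deterministic normalization, a piecewise-constant squeeze $r \le f \le r+q$ from a \emph{fixed} number of black-box evaluations, plus exponential tails from \eqref{exp}, and to verify $\int q \le 1$ uniformly over the class. Then Algorithm \ref{ACM} runs in bounded time.

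\emph{Normalization.} Evaluate $M = f(m)$ once. The variable $X$ has density $f$ if and only if $Y = M(X-m)$ has density
$$h(y) = f(m+y/M)/M.$$
The density $h$ is log-concave, has its mode at $0$, satisfies $h(0)=1$, and can be evaluated through the black box. It therefore suffices to generate $Y$ from $h$ and return $m + Y/M$. By \eqref{exp}, $h(y) \le \min(1, e^{1-|y|})$. Moreover, $h$ is nonincreasing on $[0,\infty)$ and nondecreasing on $(-\infty,0]$.

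\emph{Construction of $r$ and $q$.} Fix $\delta = 1/8$ and $L = 5$, and let $y_i = i\delta$ for $|i| \le k = L/\delta = 40$. Compute the $2k+1$ values $h(y_i)$. Define $r$ on each cell of the grid to be the value of $h$ at the endpoint farther from $0$:
\begin{itemize}
\item on $[y_i, y_{i+1}]$ with $i \ge 0$, set $r = h(y_{i+1})$;
\item symmetrically on the negative side;
\item outside $[-L, L]$, set $r = 0$.
\end{itemize}
By monotonicity of $h$ on each side of the mode, $r \le h$. Define $q = \bar r - r$ on $[-L, L]$, where $\bar r$ is the analogous staircase using the endpoint nearer $0$, so that $\bar r \ge h$ there. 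Outside $[-L, L]$, set $q(y) = e^{1-|y|}$. Then $h - r \le q$ everywhere.

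\emph{Bounding $\int q$.} On each side of $0$ the steps telescope:
$$\int_0^L (\bar r - r) = \delta\,\bigl(h(0) - h(L)\bigr) \le \delta .$$
The same holds on the negative side. Hence
$$\int q \le 2\delta + 2e^{1-L} = \tfrac14 + 2e^{-4} < 1,$$
uniformly over all log-concave $f$. Also $\int r \le \int h = 1$, and both $\int r$ and $\int q$ are explicitly computable from the stored values.

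\emph{Sampling.} A variate proportional to $r$ is obtained by choosing a cell with probability proportional to $\delta\,h(\text{outer endpoint})$, using a finite table of $2k$ entries, and placing a uniform point in that cell. A variate proportional to $q$ is obtained by choosing among the $2k$ rectangles and the two exponential tails, the latter generated as $\pm(L + \log(1/U))$. All of this is loopless and costs $O(k)$ operations, with $k$ an absolute constant.

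\emph{Correctness.} Plugging into Algorithm \ref{ACM} yields $h$, since $0 \le r \le h \le r+q$ and $\int q \le 1$. The step I expect to require the most care is precisely the uniform bound $\int q \le 1$. The naive two-step lower bound used for the gamma density fails here, because the tail envelopes have area near $1$ when $h$ decays slowly. The fix is the fine staircase, whose excess is controlled by the telescoping sum, together with pushing the tail cut-off to $L$ so that $2e^{1-L}$ is small. One must also check that \eqref{exp} gives the tail envelope with the correct constants after normalization, namely $h(0) = 1$ and rate $1$, which is the case.
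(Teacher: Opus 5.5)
Your proof is correct, and its core is the paper's construction: a grid of mesh $\delta/f(m)$ centred at the mode, the outer-endpoint staircase as $r$, and the inner-endpoint staircase as the upper bound. The telescoping then caps the central part of $\int q$ at $2\delta$. Your normalization $h(y)=f(m+y/M)/M$ is only cosmetic.

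The genuine difference is in the tails. The paper extends the bound beyond $s_{\pm n}$ by the chord of $\log f$ through $m$ and $s_{\pm n}$, which is valid by concavity of $\log f$. The area of that tail, $n\Delta f(s_n)/\log(f(m)/f(s_n))$, depends on the tabulated values. To control it the paper needs an extra, sharper tail inequality, \eqref{exp2}. This is what lets it take $n=7$, i.e.\ $16$ intervals and $15$ tabulated values, with $\int q$ just below $1$.

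You instead use the fixed envelope $e^{1-|y|}$ from \eqref{exp} beyond $\pm L$. This has three advantages:
\begin{itemize}
\item no additional lemma is needed;
\item the tail sampler $\pm(L+\log(1/U))$ does not depend on the table;
\item the bound $2\delta+2e^{1-L}$ is transparent and leaves a wide margin.
\end{itemize}
The price is a longer grid. Your choice uses $82$ pieces and $81$ tabulated values. Even after optimizing $2\delta+2e^{1-k\delta}<1$, the crude tail forces $k=9$ cells per side with $\delta\approx 0.355$, i.e.\ $20$ pieces, since $k=8$ gives a minimum of about $1.02$.

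The theorem only asks for a uniformly bounded execution time, so your simpler route fully suffices. The paper's refinement matters only for its practical remarks on the number of intervals and for Open Problem~2.
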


\begin{proof}
We propose a grid to define $r$ and $q$ based on the unimodality of $f$.
To this end, we need a good tail bound on the density $f$, sharper than \eqref{exp}.
Note that the largest value of $\log (f(x))$, $x>m$ occurs when $\log f$ is linear between $(m, \log(f(m))$ and $(x,\log (f(x)))$ and zero outside $[m,x]$. Assume that $(x-m)f(m)> 1$. The largest value of $f(x)$ is fixed by making the integral one, i.e.,
\begin{align*}
\int_0^x &f(m) e^{-(y-m) \frac{\log(f(m)/f(x))}{x-m}} \, dx\\
&= (x-m)f(m) \frac{1-f(x)/f(m)}{\log(f(m)/f(x))} \\
&= (x-m) \frac{f(m)-f(x)}{\log(f(m)/f(x))} \\
&= 1\\
&\ge (x-m) \frac{f(m)-f(m)e^{1-(x-m)f(m)}}{\log(f(m)/f(x))} \textrm{ (by \eqref{exp})}
\end{align*}
Thus,
\begin{align}\label{exp2}
f(x) \le f(m) \exp\left( -(x-m)f(m)\left(1-e^{1-(x-m)f(m)}\right)\right).
\end{align}
To this end, fix an integer $n>1$ and a grid width $\Delta = \delta / f(m)$ for $n\delta > 1$.  The line is partitioned into $2n+2$ intervals defined by the points $s_i = m + i \Delta$, $-n \le i \le n$.
We define positive functions $r$ and $Q$ by
$$
r(x) = f(s_i) \le f(x) \le f(s_{i-1}) = Q(x), s_{i-1} < x \le s_i, 1 \le i \le n, 
$$
$$
r(x) = f(s_i) \le f(x) \le f(s_{i+1}) = Q(x), s_{i} < x \le s_{i+1}, -n \le i \le -1, 
$$
$$
r(x) = 0 \le f(x) \le f(s_n) \exp\left(-(x-s_n) \frac{\log(f(m)/f(s_n))}{s_n-m} \right) = Q(x), x > s_n,
$$
and
$$
r(x) = 0 \le f(x) \le f(s_{-n}) \exp\left( -(s_{-n} -x) \frac{\log(f(m)/f(s_{-n}))}{m-s_{-n}} \right) = Q(x), x < s_{-n}.
$$
From \eqref{exp2}, we deduce
\begin{align}\label{exp3}
\max( f(s_n), f(s_{-n}))
\le f(m) \exp\left( -n\delta\left(1-e^{1-n\delta}\right)\right).
\end{align}
Set $q=Q-r$. With such a choice, we have
\begin{align*}
\int q 
&= 2\Delta f(m) -\Delta f(s_n) -\Delta f(s_{-n}) 
+ \int_{s_n}^\infty f(s_n) e^{-(x-s_n) \frac{\log(f(m)/f(s_n))}{s_n-m} } \,dx \\
&\qquad + \int_{-\infty}^{s_{-n}} f(s_{-n}) e^{-(s_{-n}-x) \frac{\log(f(m)/f(s_{-n}))}{m-s_{-n}} } \,dx\\
&= 2\Delta f(m) -\Delta f(s_n) -\Delta f(s_{-n})
+ \frac{n\Delta f(s_n)}{\log(f(m)/f(s_n))}
+ \frac{n\Delta f(s_{-n})}{\log(f(m)/f(s_{-n}))}\\
&\le 2\delta \left(1 + \exp\left( -n\delta\left(1-e^{1-n\delta}\right)\right) \left( \frac{ n}{n\delta\left(1-e^{1-n\delta}\right)} - 1 \right) \right).
\end{align*}
The first $n$ for which this bound drops below $1$ for some $x$ is $n=7$. Any choice
$$
0.3467 \le \delta \le 0.4643
$$
will do. The minimal value is approximately $0.9540$ and occurs for $\delta \approx 0.3968$.  We will take $\delta=2/5$ in the algorithm proposed below.
\end{proof}
The proof reveals that a partition of $\RR$ into $16$ intervals suffices for the acceptance-com{-}plement method to work.  Remarkably, the locations of these intervals can be fixed as a function of $f(m)$ and $m$ without knowing the shape of $f$ within the family of log-concave densities. For particular subfamilies, one can do with fewer intervals: for example, as we showed, four intervals suffice for the family of gamma densities of parameter $\ge 5$. As $f$ needs to be computed at $16$ points per generated random variate, this particular version of the acceptance-complement method cannot be competitive with respect to the universal rejection method for log-concave densities \cite{D4}, which requires on average not more than $5$ evaluations per generated variate.  Yet, for batch simulation, with the table of $15$ $f(s_i)$ values ($|i| \le 7$) pre-computed, the acceptance-complement method should be competitive.  Also, there is virtually nothing gained by increasing the size of the partition beyond $16$. We believe that this is the first simulation method for all log-concave densities with deterministically and uniformly bounded execution time.

The algorithms for $r$ and $q$ start with the selection of a discrete random variable taking only 16 values.  For this, we recommend Walker's method \cite{walker1974new, walker1977efficient},
which takes the integer part of a uniform random variable on $[0,16]$ to make one selection and uses the remainder to decide whether that selection should be changed.  

\begin{remark}\textsc{Open problem \# 2.}
 Our algorithm uses a partition into 16 intervals. What is the smallest partition size that works for all log-concave densities?  \end{remark}

We conclude this section with an implementation suggested by the proof of Theorem \ref{theorem}. To save space, we break the solution into three parts: the precomputation of table values, the algorithm for generating a random variate with a density proportional to the lower bound $r$, and the overall algorithm.

The values pre-computed include $M=f(m)$, $\delta = 2/5$, $n=7$, $\Delta = \delta/M$, $s_i = m+i\Delta$ ($-n \le i \le n$), and $f_i = f(s_i)$ ($-n \le i \le n$).
For the lower bound, we need $\int r$ over all individual intervals. To the right of $m$, these intervals are numbered from $1$ to $n$.  To the left of $m$, they are numbered from $-1$ down to $-n$. These integrals are
$$
r_i =
  f_i \Delta ,  -n \le i \le n, i\not=0 .
$$
Let the cumulative values be $R_i = \sum_{j=-n}^{i} r_j$.
For the upper bound $q \ge f-r$, we need $\int Q$ over all individual intervals. To the right of $m$, these intervals are numbered from $1$ to $n+1$, the last one being half-infinite.  To the left of $m$, they are numbered from $-1$ down to $-(n+1)$. These integrals are
$$
Q_i =
\begin{cases}
  f_{n} \frac{n\Delta }{ \log(M/f_n)}  & i=n+1 ; \\
  f_{i-1} \Delta  & 1 \le i \le n ; \\
  f_{i+1} \Delta  & -n \le i \le -1 ; \\
  f_{-n} \frac{n\Delta }{\log(M/f_{-n})}   & i=-(n+1) . 
\end{cases}
$$
We also need 
$$
A= \int q = \int Q -\int r= 2\delta - r_n -r_{-n} + Q_{n+1} + Q_{-(n+1)}. 
$$

\begin{algorithm}[H]
\caption{Generator for $r$}\label{r2}
\begin{algorithmic}[6]
\State generate $I$ such that $\PROB \{ J=j\} = r_j/R_n$, $-n \le j \le n, j\not=0. $
\State generate $U$ uniformly on $[0,1]$
\If {$I>0$} 
\State return $X \gets m+ (J- U)\Delta$
\Else
\State return $X \gets m - (J - U)\Delta$
\EndIf
\Comment{$X$ has density proportional to $r$}
\end{algorithmic}
\end{algorithm}

\begin{algorithm}[H]
\caption{Generator for a log-concave density $f$ with mode $m$}\label{logconcave}
\begin{algorithmic}[7]
\State let $W$ be uniform on $[0,1]$
\If {$W \ge 1-A$}
\State let $J=j$ with probability $(Q_j-r_j)/A$, $|j|\le n+1, j\not=0$ (where $r_{n+1}=r_{-(n+1)}=0$)
\State let $U, V $ be uniform on $[0,1]$
\Switch{$J$}
        \Case{$1 \le J \le n$}
        \State $Y \gets m+(J-U)\Delta$
        \If {$V (f_{J-1} - f_J) \le f(Y)-f_J$}
            \State return $X \gets Y$ and halt
        \EndIf
        \EndCase
        \Case{$-n \le J \le -1$}
        \State $Y \gets m-(J-U)\Delta$
        \If {$V (f_{J+1} - f_J) \le f(Y)-f_J$}
        \State  return $X \gets Y$ and halt
        \EndIf
        \EndCase
        \Case{$J=n+1$}
        \State $Y \gets m + n\Delta  + n\Delta \log\left(\frac 1U\right) \left( \frac{1}{\log (M / f_n)} \right) $
            \If {$VU f_n  \le f(Y)$} \State return $X \gets Y$ and halt 
            \EndIf
        \EndCase
        \Case{$J=-(n+1)$}
        \State $Y \gets m - n\Delta  - n\Delta \log\left(\frac 1U\right) \left( \frac{1}{\log (M / f_{-n})} \right) $
            \If {$VU f_{-n}  \le f(Y)$} \State return $X \gets Y$ and halt 
            \EndIf
        \EndCase
\EndSwitch 
\EndIf
\State return $X$ with density proportional to $r$
\Comment{$X$ has density $f$}
\end{algorithmic}
\end{algorithm}

\begin{remark}
\textsc{Table-based rejection algorithms for log-concave distributions.}
Gilks \cite{G1}, Gilks and Wild \cite{GW1, GW2} and Gilks, Best and Tan \cite{GBT} have suggested partitions of the sample space into intervals for designing fast rejection sampling algorithms for log-concave distributions.     
\end{remark}

\section{Extensions}\label{Extensions}

The previous example can be generalized quite easily to all (black box) unimodal densities for which one has an explicit upper bound on the densities in the left and right tails and knows the position of the mode.  These include unimodal distributions with known $r$-th central moment with $r > 0$ or with known finite value of $\int e^{\alpha |x|} f(x) \, dx$ for fixed $\alpha > 0$. One could also deal with multimodal densities provided one knows the locations of all maxima and minima, and has tail bounds as discussed above. In all these cases, there exists a finite partition that allows one to design an acceptance-complement-based random variate generator.  In some cases, the partitions could become quite large, and therefore, this approach is not competitive with rejection algorithms unless one desires to generate random variates in batches.


\section{Appendix: Proof of the validity of the acceptance-complement method}\label{ACProof}

Let $B$ be a Borel set.
Referring to algorithm \ref{AC1}, we have
\begin{align*}
\PROB &\{ X \in B \}\\
&=
\PROB \{ Y \in B , U q(Y) \le f(Y)-r(Y) \}
+
\PROB \{ X \in B , U q(Y) \ge f(Y)-r(Y) \}\\
&=
\int_B q(y) \frac{f(y)-r(y)}{q(y)}\, dy
+
\frac{\int_B r (x)\, dx} {\int r(x) \, dx} \times
\left(
1 - \int q(y) \frac{f(y)-r(y)}{q(y)}\, dy
\right)\\
&=
\int_B (f(y)-r(y))\, dy
+
\frac{\int_B r (x)\, dx} {\int r(x) \, dx} \times
\int r(y)\, dy\\
&=
\int_B f(y)\,dy.
\end{align*}


\bibliographystyle{plainnat}
\bibliography{p.bib}

@article{teichroew1957mixture,
  author    = {Teichroew, D.},
  title     = {The Mixture of Normal Distributions with Different Variances},
  journal   = {The Annals of Mathematical Statistics},
  volume    = {28},
  number    = {2},
  pages     = {510--512},
  year      = {1957},
  publisher = {Institute of Mathematical Statistics},
  doi       = {10.1214/aoms/1177706981},
  url       = {https://doi.org}
}

@article{walker1977efficient,
  title={An efficient method for generating discrete random variables with general distributions},
  author={Walker, A. J.},
  journal={ACM Transactions on Mathematical Software},
  volume={3},
  number={3},
  pages={253--256},
  year={1977},
  publisher={Association for Computing Machinery (ACM)},
  doi={10.1145/355744.355749},
  url={https://doi.org/10.1145/355744.355749}
}

@article{walker1974new,
  title={New fast method for generating discrete random numbers with arbitrary frequency distributions},
  author={Walker, A. J.},
  journal={Electronics Letters},
  volume={10},
  number={8},
  pages={127--128},
  year={1974},
  publisher={IET},
  doi={10.1049/el:19740097},
  url={https://doi.org}
}

@article{vonneumann1963various,
  author     = {von Neumann, John},
  title      = {Various techniques used in connection with random digits},
  journal   = {Collected Works},
  volume    = {5},
  pages     = {768--770},
  publisher = {Pergamon Press},
  year      = {1963},
  note      = {Also in Monte Carlo Method, National Bureau of Standards Series, vol.~12, pp.~36-38, 1951}
}

@misc{greaves2026,
  author        = {Greaves, Dylan},
  title         = {Extended One-Liners for the Gamma, Poisson, and Binomial
Distributions},
  year          = {2026}
}

@article{greaves2026extended,
  title={Extended One-Liners for the Beta, Gamma, and {D}irichlet Distributions with Shape Parameters Below One}, 
  author={Dylan Greaves},
  year={2026},
  volume={2604.11199},
  eprint={2604.11199},
  journal={arXiv},
  primaryClass={stat.CO}
}

@book{temme1996special,
  author        = {Temme, Nico M.},
  title         = {Special Functions: An Introduction to the Classical Functions of Mathematical Physics},
  publisher     = {Wiley},
  address       = {New York},
  year          = {1996}
}

@article{karatsuba2001asymptotic,
  title     = {On the asymptotic representation of the {E}uler gamma function by {R}amanujan},
  author    = {Karatsuba, Ekatherina A.},
  journal   = {Journal of Computational and Applied Mathematics},
  volume    = {135},
  number    = {2},
  pages     = {225--240},
  year      = {2001},
  publisher = {Elsevier},
  doi       = {10.1016/S0377-0427(00)00586-0},
  note      = {MR 1850542}
}

@article{mortici2011ramanujan,
  title     = {Ramanujan's estimate for the gamma function via monotonicity arguments},
  author    = {Mortici, Cristinel},
  journal   = {The Ramanujan Journal},
  volume    = {25},
  number    = {2},
  pages     = {149--154},
  year      = {2011},
  publisher = {Springer},
  doi       = {10.1007/s11139-010-9265-y}
}

@article{mortici2011improved,
  title     = {Improved asymptotic formulas for the gamma function},
  author    = {Mortici, Cristinel},
  journal   = {Computers \& Mathematics with Applications},
  volume    = {61},
  number    = {11},
  pages     = {3364--3369},
  year      = {2011},
  publisher = {Elsevier},
  doi       = {10.1016/j.camwa.2011.04.036}
}

@article{mortici2011ramanujanlarge,
  title     = {On {R}amanujan's large argument formula for the gamma function},
  author    = {Mortici, Cristinel},
  journal   = {The Ramanujan Journal},
  volume    = {26},
  number    = {2},
  pages     = {185--192},
  year      = {2011},
  publisher = {Springer},
  doi       = {10.1007/s11139-010-9281-y}
}

@article{robbins1955remark,
  author    = {Robbins, Herbert},
  title     = {A Remark on {S}tirling's Formula},
  journal   = {The American Mathematical Monthly},
  volume    = {62},
  number    = {1},
  pages     = {26--29},
  year      = {1955},
  publisher = {Mathematical Association of America},
  doi       = {10.2307/2308012}
}

@article{Lu,
author = "Luengo, E.A.",
title = "Gamma pseudo-random number generators",
journal = "ACM Computing Surveys",
volume = "55(4)",
pages = "85",
year = "2022"
}

@article{Xi,
author = "Xi, B.  and Tan, K.M.  and Liu, C.",
title = "Logarithmic transformation-based gamma random number generators",
journal = "Journal of Statistical Software",
volume = "55(4)",
pages = "1--17",
year = "2013"
}

@article{SL,
author = "Schmeiser, B.  and Lal, R.",
title = "Squeeze methods for generating gamma variates",
journal = "Journal of the American Statistical Association",
volume = "75",
pages = "679--682",
year = "1980"
}

@article{ahrens1982computer,
  author    = {Ahrens, J. H. and Dieter, U.},
  title     = {Computer Generation of {P}oisson Deviates from Modified Normal Distributions},
  journal   = {ACM Transactions on Mathematical Software},
  volume    = {8},
  number    = {2},
  pages     = {163--170},
  year      = {1982},
  publisher = {ACM New York, NY, USA},
  doi       = {10.1145/355993.355997}
}

@article{deak1981economical,
  author    = {Deak, I.},
  title     = {An Economical Method for Random Number Generation and a Normal Generator},
  journal   = {Computing},
  volume    = {27},
  number    = {2},
  pages     = {113--121},
  year      = {1981},
  publisher = {Springer},
  doi       = {10.1007/BF02243545}
}

@article{Kronmal1,
author = "Kronmal, Richard A. and Peterson, Jr. Arthur V.",
title = "A variant of the acceptance-rejection method for the computer generatlon of random variables",
journal = "Journal of the American Statistical Association",
volume = "76",
pages = "446-451",
year = "1981"
}

@article{Kronmal2,
author = "Kronmal, Richard A. and Peterson, Jr. Arthur V.",
title = "An acceptance-complement analogue of the mixture-plus-acceptance-rejection method for generating random variables",
journal = "ACM Transactions on Mathematical Software",
volume = "10",
number = "3",
pages = "271-281",
year = "1984"
}

@article{Bailey,
author = "Bailey, R.W.",
title = "Polar generation of random variates with the $t$ distribution",
journal = "Mathematics of Computation",
volume = "62",
pages = "779--781",
year = "1994"
}

@inproceedings{D1,
author = "Devroye, L.",
title = "Random variate generation in one line of code",
booktitle = "1996 Winter Simulation Conference Proceedings",
editor = "Charnes, J.M. and Morrice, D.J. and Brunner, D.T.  and Swain, J.J.",
publisher = "ACM",
pages = "265--272",
address = "San Diego, CA",
year = "1996"
}

@article{U,
author = "Ulrich, G.",
title = "Computer generation of distributions on the m-sphere",
journal = "Applied Statistics",
volume = "33",
pages = "158--163",
year = "1984"
}

@article{D8,
author = "Devroye, L.",
journal = "Statistics and Computing",
title = "Random variate generation for the generalized inverse {G}aussian distribution",
volume = "24",
pages = "239--246",
year = "2014"
}

@article{D4,
author = "Devroye, L.",
journal = "Computing",
title = "A simple algorithm for generating random variates with a log-concave density",
volume = "33",
pages = "247--257",
year = "1984"
}

@book{D5,
author = "Devroye, L.",
title = "Non-Uniform Random Variate Generation",
publisher = "Springer-Verlag",
address = "New York",
year = "1986"
}

@incollection{G1,
author = "Gilks, W.R.",
title = "Derivative-free adaptive rejection sampling for {G}ibbs sampling",
booktitle = "Bayesian Statistics 4",
editor = "Bernardo, J. and Berger, J. and Dawid, A.P. and Smith, A.F.M.",
publisher = "Oxford University Press",
year = "1992"
}

@article{GBT,
author = "Gilks, W.R.  and Best, N.G.  and Tan, K.K.C.",
title = "Adaptive rejection {M}etropolis sampling",
journal = "Applied Statistics",
volume = "44",
pages = "455--472",
year = "1995"
}

@article{GW1,
author = "Gilks, W.R.  and Wild, P.",
title = "Adaptive rejection sampling for {G}ibbs sampling",
journal = "Applied Statistics",
volume = "41",
pages = "337--148",
year = "1992"
}

@article{GW2,
author = "Gilks, W.R.  and Wild, P.",
title = "Algorithm AS 287: Adaptive Rejection Sampling from Log-Concave density function",
journal = "Applied Statistics",
volume = "41",
pages = "701--709",
year = "1993"
}

@book{HLD,
title = "Automatic Nonuniform Random Variate Generation",
author = "H{\"o}rmann, W.  and Leydold, J.  and Derflinger, G.",
publisher = "Springer-Verlag",
address = "Berlin",
year = "2004"
}

@inproceedings{LH,
author = "Leydold, J.  and H{\"o}rmann, W.",
title = "Black box algorithms for generating non-uniform continuous random variates",
editor = "Jansen, W. and Bethlehem, J.G.",
booktitle = "COMPSTAT 2000",
pages = "53--54",
year = "2000"
}

@article{LL,
author = "Jones, M.C.",
title = "Student’s simplest distribution",
journal = "Journal of the Royal Statistical Society Series D",
volume = "51",
pages = "41--49",
year = "2002"
}

@incollection{LH2,
author = "Leydold, J.  and H{\"o}rmann, W.",
title = "Universal algorithms as an alternative for generating non-uniform continuous random variates",
editor = "Schuler, G.I. and Spanos, P.D.",
booktitle = "Monte Carlo Simulation",
pages = "177--183",
year = "2001"
}

@article{Marsaglia+Tsang,
  title={A simple method for generating gamma variables},
  author={George Marsaglia and Wai Wan Tsang},
  journal={ACM Transactions on Mathematical Software},
  volume={26},
  number={3},
  pages={363--372},
  year={2000},
  publisher={ACM}
}

@article{chambers1976method,
  author    = {Chambers, J. M. and Mallows, C. L. and Stuck, B. W.},
  title     = {A method for simulating stable random variables},
  journal   = {Journal of the American Statistical Association},
  volume    = {71},
  pages     = {340--344},
  year      = {1976}
}

@article{kanter1975stable,
  author    = {Kanter, M.},
  title     = {Stable densities under change of scale and total variation inequalities},
  journal   = {Annals of Probability},
  volume    = {3},
  pages     = {697--707},
  year      = {1975}
}

@article{zolotarev1966representation,
  author    = {Zolotarev, V. M.},
  title     = {On the representation of stable laws by integrals},
  journal   = {Selected Translations in Mathematical Statistics and Probability},
  volume    = {6},
  pages     = {84--88},
  year      = {1966}
}

@book{zolotarev1986one,
  author    = {Zolotarev, V. M.},
  title     = {One-Dimensional Stable Distributions},
  publisher = {American Mathematical Society},
  address   = {Providence, R.I.},
  year      = {1986}
}
\end{document}